\documentclass[preprint,12pt]{elsarticle}

\usepackage{amsmath,amssymb,amsthm,mathtools}

\newtheorem{theorem}{Theorem}[section]
\newtheorem{proposition}[theorem]{Proposition}
\newtheorem{lemma}[theorem]{Lemma}

\newtheorem{conjecture}[theorem]{Conjecture}
\theoremstyle{definition}
\newtheorem{definition}[theorem]{Definition}
\newtheorem{example}[theorem]{Example}
\theoremstyle{remark}

\newcommand{\F}{\mathbb{F}}
\newcommand{\PP}{\mathcal{P}}                 % projective space P_q(n)
\newcommand{\G}{\mathcal{G}}                   % Grassmannian
\newcommand{\C}{\mathcal{C}}                   % a code
\newcommand{\dist}{d_S}                        % subspace distance
\newcommand{\boxsum}{\boxplus}                 % linear addition

\journal{Finite Fields and Their Applications}

\begin{document}

\begin{frontmatter}

%%% TODO: confirm title %%%
\title{A proof of the Braun--Etzion--Vardy bound
       for binary subspace codes}

\author[mse]{Srikanth B. Pai\corref{cor1}}
\ead{srikanthbpai@mse.ac.in}
\cortext[cor1]{Corresponding author.}

\address[mse]{Madras School of Economics, Kotturpuram, Chennai 600025, India}

\begin{abstract}
The notion of a linear subspace code in a projective space was introduced by
Braun, Etzion and Vardy (2013), who conjectured that a linear subspace
code in the projective space $\PP_2(n)$ has at most $2^n$ codewords. We resolve this conjecture. A novel character-theoretic argument is used to prove the conjecture and the proof is self-contained.
\end{abstract}

\begin{keyword}
Subspace codes \sep Linear subspace codes \sep Projective space \sep
Braun--Etzion--Vardy conjecture \sep Character theory \sep Gram matrix
\MSC[2020] 94B60 \sep 51E20
\end{keyword}

\end{frontmatter}

%=====================================================================
\section{Introduction}
\label{sec:intro}
%=====================================================================
Let $\F_q$ be the finite field with $q$ elements and let $\PP_q(n)$ be the
projective space of all subspaces of $\F_q^n$. Equipped with the \emph{subspace
distance}
\[
 \dist(X,Y)=\dim X+\dim Y-2\dim(X\cap Y),
\]
$\PP_q(n)$ becomes a metric space, and a \emph{subspace code} is a subset of
$\PP_q(n)$. Such codes were introduced by K\"otter and
Kschischang~\cite{KoetterKschischang} in the context of random network coding,
and their combinatorics was developed by Etzion and Vardy~\cite{EtzionVardy}.

Classical error correcting codes can be studied as an algebraic theory primarily because classical block codes are \emph{linear} and the Hamming distance is translation invariant with respect to this linear structure. Seeking an analogue in projective space, Braun, Etzion and Vardy~\cite{Braun} called a subspace code $\C$ containing $0$ a \emph{linear subspace code} if it carries a binary operation $\boxsum$ making $(\C,\boxsum)$ an elementary abelian $2$-group with identity $0$, under which the subspace distance is translation invariant. Such a code is an
$\F_2$-vector space, so $|\C|$ is a power of $2$. They showed that taking the spans of all subsets of a fixed basis of $\F_2^n$ (with symmetric difference of index sets as addition) yields a linear subspace code of size exactly $2^n$. Braun, Etzion and Vardy conjectured that this construction is optimal: over the binary field, no linear subspace code has more than $2^n$ codewords. We resolve this conjecture in this paper.

In the intervening years, various attempts have been made to partially resolve the conjecture. Pai and Rajan~\cite{PaiRajan} proved the bound under the hypothesis that the full space $\F_2^n$ is itself a codeword, together with the sharp converse that equality then forces $\C$ to be derived from a fixed basis. Their argument uses the fact that translating a codeword by the full space produces a complement of the codeword. This translation map therefore pairs complementary dimension layers, and an application of a cross-intersection
inequality of Lov\'asz finishes the problem. Basu and Kashyap~\cite{BasuKashyapNCC,BasuKashyapLattice} established the bound for codes closed under intersection and analysed their lattice structure. In a different direction, Basu~\cite{Basu} studied equidistant linear subspace codes, which attain the largest normalized minimum distance in his sense. He proved that an equidistant code in $\PP_2(n)$ has at most $2^n$ codewords; for $n\ge2$, equality occurs only for the Fano-plane code in dimension three. He also gave constructions showing that, for every prime power $q>2$, some $\PP_q(n)$ contains an equidistant linear subspace code with more than $2^n$ codewords. Thus the conjecture is false over every non-binary field, and only the binary case can survive.
Most recently, Mahak and Bhaintwal~\cite{MahakBhaintwal} determined the size of a linear subspace code with exactly $n-1$ or $n-2$ one-dimensional codewords.

Each of these results confirms the BEV conjecture under an extra
hypothesis: that the full space is a codeword, that the code is closed under
intersection, that it is equidistant, or that it has many one-dimensional
codewords. The general binary case, reached by none of these methods, has remained open. We settle it in full.

\begin{theorem}
\label{thm:main-intro}
Every binary linear subspace code $\C\subseteq\PP_2(n)$ satisfies
$|\C|\le 2^n$.
\end{theorem}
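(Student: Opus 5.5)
The plan is to realise $\C$ as a family of vectors in a real inner-product space of dimension $2^n$ whose Gram matrix is invariant under the group $(\C,\boxsum)$. Characters of $\C$ then diagonalise that matrix. If it is nonsingular, its rank $|\C|$ is at most $2^n$.

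\textbf{The embedding.} The basic identity is translation invariance applied with the identity $0$:
\[
 \dim(X\boxsum Y)=\dist(X\boxsum Y,0)=\dist(X,Y)=\dim X+\dim Y-2\dim(X\cap Y).
\]
For $X\in\C$ let $\mathbf 1_X\in\mathbb R^{\F_2^n}$ be the indicator function of $X$, viewed as a set of $2^{\dim X}$ vectors. Put $w_X=2^{-\dim X/2}\mathbf 1_X$. Then
\[
 \langle w_X,w_Y\rangle=2^{-(\dim X+\dim Y)/2}\,2^{\dim(X\cap Y)}=2^{-\dim(X\boxsum Y)/2}=g(X\boxsum Y),
\]
where $g(Z)=2^{-\dim Z/2}$. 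So the Gram matrix $G=(g(X\boxsum Y))_{X,Y\in\C}$ is a group matrix of the elementary abelian $2$-group $\C$. Its eigenvectors are the characters $\chi$ of $\C$, with eigenvalues
\[
 \hat g(\chi)=\sum_{Z\in\C}\chi(Z)\,2^{-\dim Z/2}.
\]
Since the $w_X$ lie in $\mathbb R^{2^n}$, we have $\operatorname{rank}G\le 2^n$. Theorem~\ref{thm:main-intro} therefore follows once every $\hat g(\chi)$ is shown to be nonzero. Equivalently, the $w_X$ must be linearly independent, meaning the functions $\mathbf 1_X$, $X\in\C$, are linearly independent.

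\textbf{The main obstacle: nonvanishing of $\hat g(\chi)$.} I would first exploit the parity structure. The displayed identity gives $\dim(X\boxsum Y)\equiv\dim X+\dim Y\pmod 2$, so $\varepsilon(Z)=(-1)^{\dim Z}$ is itself a character of $\C$. Let $\C_0=\ker\varepsilon$ be the even-dimensional codewords. Then $\hat g(\chi)=A_\chi+2^{-1/2}B_\chi$ with $A_\chi,B_\chi\in\mathbb Z[\tfrac12]$, where $A_\chi$ collects the even-dimensional terms and $B_\chi$ the odd ones. Since $\sqrt2$ is irrational, $\hat g(\chi)=0$ forces $A_\chi=B_\chi=0$. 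Now $A_\chi=\sum_{Z\in\C_0}\chi(Z)2^{-\dim Z/2}$ contains the term $1$ from $Z=0$. I would clear denominators by multiplying by $2^{m}$, where $2m$ is the largest even dimension occurring, and aim for a $2$-adic contradiction. The key combinatorial input needed is that the number of top-dimensional codewords, weighted by $\chi$, is odd. This is where the real work lies, because several codewords may share the maximal dimension.

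If that count fails, my fallback is to prove linear independence of the $\mathbf 1_X$ directly. Order the codewords by dimension and use inclusion--exclusion on minimal elements: a vector lying in $X$ but in no other codeword of dimension at most $\dim X$ witnesses independence. To guarantee such vectors exist, I would use translation invariance. If $X\subseteq Y_1\cup\dots\cup Y_r$ with each $\dim Y_i\le\dim X$, apply the map $\cdot\boxsum X$ and compare dimensions via the identity above, which should yield a contradiction.
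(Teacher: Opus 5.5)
Your setup (normalized indicators, the group matrix $2^{-\dim(X\boxsum Y)/2}$, character eigenvalues $\hat g(\chi)$, and the split into rational and $\sqrt2$ parts) matches the paper's. However, the step you flag as the key combinatorial input is false, so your main route does not close.

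Here is a counterexample. In $\F_2^5$ with basis $w_1,w_2,w_3,a,b$, put $X=\langle w_1,w_2,w_3,a\rangle$, $Y=\langle w_1,w_2,w_3,b\rangle$, $Z=\langle a,b\rangle$. Then $\dist(X,Y)=2=\dim Z$, $\dist(X,Z)=4=\dim Y$ and $\dist(Y,Z)=4=\dim X$. So $\{0,X,Y,Z\}$ with $X\boxsum Y=Z$ is a linear subspace code. It is entirely even-dimensional, and its top dimension $4$ is attained by exactly two codewords. Here $4A_\chi=4+\chi(X)+\chi(Y)+2\chi(Z)$ is even for every $\chi$, so no parity argument on $2^mA_\chi$ can succeed. (The target $A_\chi\ne0$ is true, since $A_\chi$ is the eigenvalue of the even-dimensional subcode; parity at the top simply cannot detect it.)

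The structural reason is that $\hat g(\chi)$ equals the coordinate $w(0)$ of $w=\sum_X\chi(X)2^{-\dim X/2}\mathbf 1_X$. Its weights $2^{-\dim Z/2}$ make the top-dimensional codewords $2$-adically dominant, and you have no control over their number. The one term you do control, $Z=0$, is the $2$-adically least significant.

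Your fallback fails too. In the Fano code of Example~\ref{ex:equality}, each nonzero vector of $H_f$ lies in two other planes. Hence $H_f\subseteq\bigcup_{g\ne f}H_g$ with every $\dim H_g=\dim H_f$, and no codeword other than $0$ has a private vector. The contradiction you hope to extract from translation invariance therefore does not exist.

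The missing idea is to test the vanishing vector $w$ against a different functional than the $0$-coordinate. If $\hat g(\chi)=0$, then $\|w\|^2=|\C|\,\hat g(\chi)=0$, so $w=0$ in $\mathbb R^V$. Summing its coordinates over all $v\in V$ (equivalently, pairing with $\mathbf 1_V$) gives
\[
 \sum_{X\in\C}\chi(X)\,2^{\dim X/2}=0 .
\]
The weights are now $2^{+\dim X/2}$. The rational part is $1+\sum_{X\ne0,\ \dim X\ \mathrm{even}}\chi(X)\,2^{\dim X/2}$, which is odd because $0$ is the only codeword of dimension $0$ and every other even-dimensional term is an even integer. This reversal of the $2$-adic ordering is exactly the paper's argument, which sums the $v=0$ and $v\neq0$ coordinate identities separately and adds them. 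It needs nothing about how many codewords share the top dimension.
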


A high level sketch of the argument is as follows: We embed each codeword $X$
as the normalized indicator $2^{-\dim X/2}\mathbf 1_X$ of its point set in
$\mathbb R^{\F_2^n}$; the bound $|\C|\le 2^n$ follows once these vectors are
shown to be linearly independent, which amounts to proving the nonsingularity of their Gram matrix $M=\bigl(2^{-\dist(X,Y)/2}\bigr)$. Because $M$ is constant on the translates of $\boxsum$, it is diagonalized by the characters of the group
$(\C,\boxsum)$, and a vanishing eigenvalue would force an $\F_2$-character
relation among the indicators. Comparing rational parts of this relation in the
quadratic field $\mathbb Q(\sqrt2)$ then yields a parity contradiction. Hence
we conclude $M$ is nonsingular and the bound is proven.

The paper is organized as follows. Section~\ref{sec:prelim} fixes notation and
recalls the one property of linear subspace codes we use. Section~\ref{sec:proof}
sets up the Gram matrix, works out its spectrum through the character theory of
$(\C,\boxsum)$, and gives the parity argument that proves
Theorem~\ref{thm:main-intro} (restated as Theorem~\ref{thm:main}); it also
discusses the equality case. Section~\ref{sec:remarks} collects concluding
remarks.

%=====================================================================
\section{Preliminaries on linear subspace codes}
\label{sec:prelim}
%=====================================================================
Throughout, $q$ denotes a prime power and $\F_q$ the finite field with $q$
elements. For a positive integer $n$, the vector space $\F_q^n$ has dimension
$n$ over $\F_q$. The \emph{projective space} $\PP_q(n)$ is the set of all
subspaces of $\F_q^n$, and for $0\le k\le n$ the \emph{Grassmannian}
$\G_q(n,k)\subseteq\PP_q(n)$ is the set of subspaces of dimension $k$. We
write $0$ for the trivial subspace $\{0\}$. For subspaces $X,Y\in\PP_q(n)$,
the sum $X+Y$ is the smallest subspace containing both, and the dimension
formula reads
\[
 \dim(X+Y)=\dim X+\dim Y-\dim(X\cap Y).
\]

A \emph{subspace code} is a nonempty subset of $\PP_q(n)$. Following
K\"otter and Kschischang~\cite{KoetterKschischang}, $\PP_q(n)$ is made a
metric space by the \emph{subspace distance}
\[
 \dist(X,Y):=\dim(X+Y)-\dim(X\cap Y)
           =\dim X+\dim Y-2\dim(X\cap Y),
\]
which is used to correct errors and erasures in random network coding.

Braun, Etzion and Vardy~\cite{Braun} singled out those subspace codes that
carry an algebraic structure compatible with the metric.

\begin{definition}
\label{def:linear-code}
A subspace code $\C\subseteq\PP_q(n)$ with $0\in\C$ is a \emph{linear
subspace code} if it is equipped with a binary operation
$\boxsum:\C\times\C\to\C$ such that
\begin{enumerate}
\item[\textup{(i)}] $(\C,\boxsum)$ is an abelian group with identity $0$;
\item[\textup{(ii)}] $X\boxsum X=0$ for every $X\in\C$; and
\item[\textup{(iii)}] the subspace distance is translation invariant, that is,
\[
 \dist(X,Y)=\dist(X\boxsum Z,\,Y\boxsum Z)
 \qquad\text{for all }X,Y,Z\in\C.
\]
\end{enumerate}
The operation $\boxsum$ is called a \emph{linear addition} on $\C$.
\end{definition}

Conditions (i) and (ii) say that $(\C,\boxsum)$ is an elementary abelian
$2$-group; equivalently, $\C$ is a vector space over $\F_2$ under $\boxsum$,
with scalar multiplication $0\cdot X=0$ and $1\cdot X=X$. In particular the
cardinality of a linear subspace code is a power of $2$. Braun, Etzion and
Vardy studied linearity over the binary field and asked whether this power can
ever exceed $2^n$; they conjectured that it cannot
\cite[Section~5, Problem~1]{Braun}.

\begin{conjecture}
\label{conj:bev}
Every binary linear subspace code $\C\subseteq\PP_2(n)$ satisfies
\[
 |\C|\le 2^n.
\]
\end{conjecture}

We refer to Conjecture~\ref{conj:bev} as the
BEV conjecture. Over larger fields the analogous bound
fails~\cite{Basu}: for every prime power $q>2$ there exist linear subspace
codes in $\PP_q(n)$ with more than $2^n$ codewords. 

The only property of the linear addition we shall need is that it realizes the
subspace distance as a dimension:
\begin{equation}
 \dim(X\boxsum Y)=\dist(X,Y)\qquad(X,Y\in\C),
 \label{eq:dim-identity}
\end{equation}
which is \cite[Lemma~6]{Braun}.

From Section~\ref{sec:proof} onwards we specialize to $q=2$ and write
$V=\F_2^n$.

%=====================================================================
\section{Proof of the Braun--Etzion--Vardy conjecture over $\F_2$}
\label{sec:proof}
%=====================================================================

Recall that $V=\F_2^n$. To each subspace $X\in\PP_2(n)$ we associate its
\emph{indicator vector} $\mathbf 1_X\in\mathbb R^{V}$, given by
\[
 \mathbf 1_X(v)=
 \begin{cases}
  1, & v\in X,\\[2pt]
  0, & v\notin X,
 \end{cases}
\]
and we equip $\mathbb R^{V}$ with the standard inner product
\[
 \langle f,g\rangle=\sum_{v\in V} f(v)\,g(v).
\]

A $k$-dimensional subspace of $\F_2^n$ has exactly $2^{k}$ elements. Hence, for
all $X,Y\in\PP_2(n)$,
\begin{equation}
 \langle \mathbf 1_X,\mathbf 1_Y\rangle=|X\cap Y|=2^{\dim(X\cap Y)}.
 \label{eq:indicator-inner}
\end{equation}
In particular $\langle \mathbf 1_X,\mathbf 1_X\rangle=2^{\dim X}$, so we
normalize the indicators and set
\begin{equation}
 u_X:=2^{-\dim X/2}\,\mathbf 1_X\in\mathbb R^{V}
 \qquad(X\in\C).
 \label{eq:normalized}
\end{equation}
Combining \eqref{eq:indicator-inner} with the subspace-distance identity
\[
 \dist(X,Y)=\dim X+\dim Y-2\dim(X\cap Y),
\]
we obtain, for all $X,Y\in\C$,
\begin{equation}
 \begin{aligned}
  \langle u_X,u_Y\rangle
   &=2^{-(\dim X+\dim Y)/2}\,\langle \mathbf 1_X,\mathbf 1_Y\rangle\\[6pt]
   &=2^{-(\dim X+\dim Y)/2}\,2^{\dim(X\cap Y)}\\[6pt]
   &=2^{-\dist(X,Y)/2}.
 \end{aligned}
 \label{eq:gram-entry}
\end{equation}

Recall that the \emph{Gram matrix} of a finite family of vectors in a real
inner-product space is the matrix of their pairwise inner products; it is
positive semidefinite, and it is nonsingular if and only if the vectors are
linearly independent (see, e.g., \cite[Section~7.2]{HornJohnson}). By
\eqref{eq:gram-entry}, the Gram matrix of the family $(u_X)_{X\in\C}$ is
\begin{equation}
 M=\Bigl(\,2^{-\dist(X,Y)/2}\,\Bigr)_{X,Y\in\C},
 \label{eq:gram-matrix}
\end{equation}
and its diagonal entries equal $1$, since $\dist(X,X)=0$.

\begin{proposition}
\label{prop:gram}
If the Gram matrix $M$ of \eqref{eq:gram-matrix} is nonsingular, then
$|\C|\le 2^n$.
\end{proposition}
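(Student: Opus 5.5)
The plan is to pass from nonsingularity of $M$ to linear independence of the vectors $u_X$, and then bound their number by the dimension of the ambient space $\mathbb R^{V}$.

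Step 1. I recall the standard fact quoted just before the statement: the Gram matrix of a finite family of vectors is nonsingular if and only if the family is linearly independent. To keep the argument self-contained, I would verify the relevant direction directly. Suppose $\sum_{X\in\C} c_X u_X=0$ with real coefficients $c=(c_X)_{X\in\C}$. Taking the inner product with each $u_Y$ gives
\[
\sum_{X\in\C} c_X\langle u_X,u_Y\rangle=0 \qquad (Y\in\C),
\]
that is, $Mc=0$. If $M$ is nonsingular, this forces $c=0$. Hence the vectors $(u_X)_{X\in\C}$ are linearly independent.

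Step 2. I need the family to be indexed faithfully, so that $|\C|$ distinct vectors are being counted. Distinct codewords $X\neq Y$ have distinct point sets, so $\mathbf 1_X\neq\mathbf 1_Y$. In any case, linear independence of the indexed family already rules out repetitions, because two equal vectors would give a nontrivial relation. So Step 1 produces $|\C|$ linearly independent vectors in $\mathbb R^{V}$.

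Step 3. Since $V=\F_2^n$ has $2^n$ elements, the space $\mathbb R^{V}$ has dimension $2^n$. A linearly independent set cannot be larger than the dimension, so $|\C|\le 2^n$.

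There is no real obstacle in this argument. The only point needing care is the identification of $M$ with the Gram matrix of the $u_X$, which is exactly \eqref{eq:gram-entry}. The substantive difficulty, proving that $M$ is in fact nonsingular, is deferred to the subsequent character-theoretic argument.
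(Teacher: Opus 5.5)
Your proposal is correct and follows essentially the same route as the paper: nonsingularity of $M$ gives linear independence of the $u_X$ in $\mathbb R^{V}$, which has dimension $2^n$. The only difference is that you verify the Gram-matrix fact directly (a relation $\sum_X c_X u_X=0$ yields $Mc=0$, using the symmetry of $M$) instead of citing it, and you make the faithful-indexing point explicit; both are harmless additions.
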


\begin{proof}
By \eqref{eq:gram-entry}, $M$ is the Gram matrix of the family
$(u_X)_{X\in\C}$ in $\mathbb R^{V}$.

Since the Gram matrix $M$ is nonsingular, the vectors $\{u_X:X\in\C\}$ are
linearly independent in $\mathbb R^{V}$. By \eqref{eq:normalized} each $u_X$ is
a nonzero scalar multiple of $\mathbf 1_X$, so the indicators
$\{\mathbf 1_X:X\in\C\}$ are linearly independent as well. A linearly
independent subset of $\mathbb R^{V}$ has at most
\[
 \dim_{\mathbb R}\mathbb R^{V}=|V|=2^{n}
\]
elements, and therefore $|\C|\le 2^{n}$.
\end{proof}

Being a Gram matrix, $M$ is positive semidefinite, so its nonsingularity is
equivalent to positive definiteness, $M\succ 0$. By
Proposition~\ref{prop:gram} it therefore suffices to prove that $M$ is
nonsingular.

The matrix $M$ carries extra structure that makes this tractable: by
\eqref{eq:gram-entry} and \eqref{eq:dim-identity} its entries depend only on
the group element $X\boxsum Y$,
\begin{equation}
 M_{X,Y}=2^{-\dist(X,Y)/2}=2^{-\dim(X\boxsum Y)/2}.
 \label{eq:group-matrix}
\end{equation}
A matrix of this form is diagonalized by the characters of the abelian group
$(\C,\boxsum)$, which we now recall.

\begin{definition}
A \emph{character} of a finite abelian group $(G,+)$ is a homomorphism
$\chi:G\to\mathbb C^{\times}$. If every element of $G$ has order two, then
\[
 \chi(g)^2=\chi(g+g)=\chi(0)=1,
\]
so $\chi(g)\in\{+1,-1\}$ for every $g\in G$; in particular the characters are
real-valued.
\end{definition}

We use only the following standard facts; see, e.g., \cite{Serre}.

\begin{lemma}
\label{lem:char-orth}
An elementary abelian $2$-group $G$ has exactly $|G|$ characters, and they
satisfy
\[
 \sum_{g\in G}\chi(g)\psi(g)=
 \begin{cases}
  |G|, & \chi=\psi,\\[2pt]
  0, & \chi\ne\psi.
 \end{cases}
\]
Consequently the character vectors $\bigl(\chi(g)\bigr)_{g\in G}$ form an
orthogonal basis of $\mathbb R^{G}$.
\end{lemma}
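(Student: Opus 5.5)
The plan is to identify the elementary abelian $2$-group $G$ with $\F_2^m$, where $|G|=2^m$; this identification comes from the vector-space structure over $\F_2$ noted after Definition~\ref{def:linear-code}. Once $G\cong\F_2^m$, I can write the characters down explicitly, check that they are distinct, count them, and prove the orthogonality relation by a direct sum computation. The final claim then follows from linear algebra in $\mathbb R^{G}$.

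\emph{Step 1 (construction and count).} Write $g=(g_1,\dots,g_m)$. For each $a\in\F_2^m$ set $\chi_a(g)=(-1)^{a\cdot g}$, where $a\cdot g=\sum_i a_ig_i\in\F_2$. Each $\chi_a$ is a homomorphism, because $(-1)^{a\cdot(g+h)}=(-1)^{a\cdot g}(-1)^{a\cdot h}$. Conversely, let $\chi$ be any character. By the computation in the preceding definition, $\chi$ takes values in $\{\pm1\}$. Being a homomorphism, it is determined by its values on the standard basis vectors $e_i$. Define $a_i\in\F_2$ by $\chi(e_i)=(-1)^{a_i}$; then $\chi=\chi_a$. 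The map $a\mapsto\chi_a$ is injective, since $\chi_a(e_i)=(-1)^{a_i}$ recovers $a$. Hence there are exactly $2^m=|G|$ characters.

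\emph{Step 2 (orthogonality).} The product of two characters is a character, and here $\chi_a\chi_b=\chi_{a+b}$. So it suffices to show that $\sum_{g\in G}\chi(g)=|G|$ when $\chi$ is trivial and $0$ otherwise; applying this to $\chi\psi$ gives the stated relation. The trivial case is immediate. If $\chi$ is nontrivial, pick $h$ with $\chi(h)=-1$. Since $g\mapsto g+h$ is a bijection of $G$,
\[
 S:=\sum_{g}\chi(g)=\sum_g\chi(g+h)=\chi(h)S=-S,
\]
so $S=0$. Note that $\chi\psi$ is trivial exactly when $\chi=\psi$, because $\psi^{-1}=\psi$ for $\{\pm1\}$-valued characters.

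\emph{Step 3 (basis).} The vectors $(\chi(g))_{g\in G}$ are real, since their entries are $\pm1$. By Step 2 they are pairwise orthogonal and nonzero, each having squared norm $|G|$. Orthogonal nonzero vectors are linearly independent, and there are $|G|=\dim\mathbb R^{G}$ of them, so they form an orthogonal basis.

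There is no real obstacle here. The one step needing a little care is the identification $G\cong\F_2^m$ together with the claim that every character factors through basis values. This is justified because a homomorphism out of an $\F_2$-vector space into $\{\pm1\}$ is the same thing as an $\F_2$-linear functional, after identifying $\{\pm1\}$ with $\F_2$.
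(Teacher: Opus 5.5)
Your argument is correct and complete, but it is not the paper's route, because the paper gives no proof of this lemma. It calls the lemma a standard fact and cites Serre. There it is a special case of the general character theory of finite groups: orthogonality of irreducible characters, plus the count of $|G|$ characters coming from two facts about an abelian group, namely that all its irreducible representations have degree one and that it has $|G|$ conjugacy classes.

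You instead use the elementary abelian structure directly. After fixing $G\cong\F_2^m$, you write every character as $\chi_a(g)=(-1)^{a\cdot g}$. Counting is then just the bijection between characters and $\F_2$-linear functionals. Orthogonality is the one-line translation argument $S=\chi(h)S$ applied to $\chi\psi$.

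Your version buys self-containment, in keeping with the abstract's claim that the proof is self-contained. It also makes explicit why the paper may write $\sum_g\chi(g)\psi(g)$ without complex conjugation: $\overline{\psi}=\psi^{-1}=\psi$ for $\{\pm1\}$-valued characters. The cited general theory buys validity for all finite abelian groups, with $\overline{\psi}$ in place of $\psi$, which the paper never needs. Your proof could replace the citation as it stands.
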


\begin{proposition}
\label{prop:eigen}
For each character $\chi$ of $(\C,\boxsum)$, the vector
$\bigl(\chi(X)\bigr)_{X\in\C}$ is an eigenvector of $M$ with eigenvalue
\begin{equation}
 \mu_\chi=\sum_{Z\in\C}2^{-\dim Z/2}\,\chi(Z).
 \label{eq:eigval}
\end{equation}
These eigenvectors form a basis of $\mathbb R^{\C}$; hence $M$ is nonsingular
if and only if $\mu_\chi\ne0$ for every character $\chi$.
\end{proposition}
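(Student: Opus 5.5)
The plan is the standard diagonalization of a group matrix. Write $f(Z)=2^{-\dim Z/2}$ for $Z\in\C$, so that by \eqref{eq:group-matrix} we have $M_{X,Y}=f(X\boxsum Y)$. Fix a character $\chi$ of $(\C,\boxsum)$ and compute the $X$-th entry of $M$ applied to the vector $(\chi(Y))_{Y\in\C}$:
\[
 \sum_{Y\in\C} f(X\boxsum Y)\,\chi(Y).
\]
The key step is the change of variables $Z=X\boxsum Y$. Since $(\C,\boxsum)$ is a group in which every element is its own inverse (conditions (i) and (ii) of Definition~\ref{def:linear-code}), the map $Y\mapsto X\boxsum Y$ is a bijection of $\C$ and $Y=X\boxsum Z$. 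Because $\chi$ is a homomorphism, $\chi(Y)=\chi(X)\chi(Z)$. The sum therefore becomes
\[
 \chi(X)\sum_{Z\in\C} f(Z)\chi(Z)=\mu_\chi\,\chi(X),
\]
which is exactly the eigenvalue equation with $\mu_\chi$ as in \eqref{eq:eigval}. The character vector is nonzero, since $\chi(0)=1$, so it is a genuine eigenvector.

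For the basis claim I would invoke Lemma~\ref{lem:char-orth} directly. There are $|\C|$ characters, and their vectors are pairwise orthogonal and nonzero, hence linearly independent, so they form a basis of $\mathbb R^{\C}$. All of them are real, because the characters take values $\pm1$.

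Finally, with $P$ the matrix whose columns are these character vectors, $P$ is invertible and $P^{-1}MP=\operatorname{diag}(\mu_\chi)$. Hence $\det M=\prod_\chi\mu_\chi$, and $M$ is nonsingular if and only if every $\mu_\chi\neq0$.

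No step is a real obstacle. The only point needing care is that the reindexing uses $X\boxsum Y=Z\iff Y=X\boxsum Z$, which relies on the elementary abelian structure, together with \eqref{eq:dim-identity} to ensure that $M$ depends only on $X\boxsum Y$.
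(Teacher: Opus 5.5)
Your proposal is correct and takes the same route as the paper: reindex with $Z=X\boxsum Y$ and use $\chi(Y)=\chi(X)\chi(Z)$ to get the eigenvalue equation, then invoke Lemma~\ref{lem:char-orth} for the basis, so nonsingularity is equivalent to every $\mu_\chi\neq0$. Your extra remarks, that $\chi(0)=1$ makes each eigenvector nonzero and that $\det M=\prod_\chi\mu_\chi$, spell out details the paper leaves implicit.
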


\begin{proof}
Fix $X\in\C$. As $Y$ ranges over $\C$, so does $Z:=X\boxsum Y$, and the
homomorphism property gives $\chi(Y)=\chi(X\boxsum Z)=\chi(X)\,\chi(Z).$

Hence, by \eqref{eq:group-matrix},
\[
 \begin{aligned}
  (M\chi)_X
   &=\sum_{Y\in\C}2^{-\dim(X\boxsum Y)/2}\,\chi(Y)\\[6pt]
   &=\chi(X)\sum_{Z\in\C}2^{-\dim Z/2}\,\chi(Z)\\[6pt]
   &=\mu_\chi\,\chi(X).
 \end{aligned}
\]
Thus $\bigl(\chi(X)\bigr)_{X\in\C}$ is an eigenvector of $M$ with eigenvalue
$\mu_\chi$. By Lemma~\ref{lem:char-orth} these eigenvectors form a basis of
$\mathbb R^{\C}$, so $M$ is diagonalizable with spectrum $\{\mu_\chi\}$ and is
nonsingular precisely when no $\mu_\chi$ vanishes.
\end{proof}

The trivial character $\chi\equiv1$ has
\[
 \mu_\chi=\sum_{Z\in\C}2^{-\dim Z/2}>0,
\]
so a vanishing eigenvalue can only come from a nontrivial character; such an
eigenvalue produces a linear dependence among the normalized indicators.

\begin{lemma}
\label{lem:singular-dep}
If $M$ is singular, then there is a nontrivial character $\chi$ of
$(\C,\boxsum)$ with
\begin{equation}
 \sum_{X\in\C}\chi(X)\,2^{-\dim X/2}\,\mathbf 1_X=0
 \qquad\text{in }\mathbb R^{V}.
 \label{eq:dependence}
\end{equation}
\end{lemma}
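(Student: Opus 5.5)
The plan is to combine Proposition~\ref{prop:eigen} with the observation that a zero eigenvalue of a Gram matrix is witnessed by a vector in the kernel, and that this kernel vector can be taken to be a character vector.

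First, I will produce a nontrivial character with vanishing eigenvalue. Suppose $M$ is singular. By Proposition~\ref{prop:eigen}, $M$ is diagonalized by the character vectors, and its spectrum is $\{\mu_\chi\}$, so some character $\chi$ satisfies $\mu_\chi=0$. The trivial character has $\mu_\chi=\sum_Z 2^{-\dim Z/2}>0$, as remarked just before the lemma. Hence this $\chi$ is nontrivial, and the vector $c=(\chi(X))_{X\in\C}$ satisfies $Mc=\mu_\chi c=0$.

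Second, I will pass from the kernel of $M$ to a linear dependence. Set
\[
 w:=\sum_{X\in\C}\chi(X)\,u_X\in\mathbb R^{V}.
\]
By \eqref{eq:gram-entry}, $M$ is the Gram matrix of $(u_X)_{X\in\C}$. Therefore
\[
 \langle w,w\rangle=\sum_{X,Y\in\C}\chi(X)\chi(Y)\langle u_X,u_Y\rangle=c^{\top}Mc=0,
\]
where we use that $\chi$ is real-valued. Positive definiteness of the standard inner product on $\mathbb R^{V}$ then gives $w=0$. Substituting $u_X=2^{-\dim X/2}\mathbf 1_X$ from \eqref{eq:normalized} turns $w=0$ into exactly \eqref{eq:dependence}.

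None of these steps is a serious obstacle. The only points needing care are that a zero eigenvalue really comes from a character eigenvector, and that the characters are real, so that $c^{\top}Mc$ equals $\langle w,w\rangle$ with no complex conjugation. Both are supplied by Proposition~\ref{prop:eigen} and by the definition of characters on an elementary abelian $2$-group, respectively.
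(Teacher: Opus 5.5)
Your proposal is correct and follows the paper's proof essentially step for step. Proposition~\ref{prop:eigen} supplies a character with $\mu_\chi=0$, positivity of the trivial eigenvalue forces it to be nontrivial, and $\|w\|^2=c^{\top}Mc=\mu_\chi|\C|=0$ gives $w=0$, which becomes \eqref{eq:dependence} after substituting $u_X=2^{-\dim X/2}\mathbf 1_X$.
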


\begin{proof}
By Proposition~\ref{prop:eigen} there is a character $\chi$ with $\mu_\chi=0$,
necessarily nontrivial by the preceding remark. Consider the vector
\[
 w=\sum_{X\in\C}\chi(X)\,u_X,
\]
with $u_X$ as in \eqref{eq:normalized}. Then, by \eqref{eq:gram-entry} and
$\chi(X)^2=1$,
\[
 \|w\|^2
 =\sum_{X,Y\in\C}\chi(X)\chi(Y)\langle u_X,u_Y\rangle
 =\mu_\chi\sum_{X\in\C}\chi(X)^2
 =0,
\]
so $w=0$. Since $u_X=2^{-\dim X/2}\mathbf 1_X$, this is exactly
\eqref{eq:dependence}.
\end{proof}

By Proposition~\ref{prop:eigen} and Lemma~\ref{lem:singular-dep}, the BEV
conjecture reduces to a single statement: no nontrivial character satisfies
\eqref{eq:dependence}, equivalently $\mu_\chi\ne0$ for every character $\chi$.
We prove this in the remainder of the section.

The coefficients in \eqref{eq:dependence} lie in the real quadratic field
$\mathbb Q(\sqrt2)$: we have $\chi(X)\in\{\pm1\}$, and, using
$2^{-1/2}=\tfrac12\sqrt2$,
\begin{equation}
 2^{-\dim X/2}=
 \begin{cases}
  2^{-\dim X/2}\in\mathbb Q, & \dim X\text{ even},\\[6pt]
  2^{-(\dim X+1)/2}\,\sqrt2\in\mathbb Q\,\sqrt2, & \dim X\text{ odd}.
 \end{cases}
 \label{eq:coeff-split}
\end{equation}
Since $\{1,\sqrt2\}$ is a $\mathbb Q$-basis of $\mathbb Q(\sqrt2)$, the rational
and $\sqrt2$ components of any $\mathbb Q(\sqrt2)$-linear relation vanish
separately; by \eqref{eq:coeff-split} the even-dimensional codewords contribute
the rational component and the odd-dimensional ones the $\sqrt2$ component.

These reductions give us all the ingredients needed to prove the BEV bound.

\begin{theorem}
\label{thm:main}
Every binary linear subspace code $\C\subseteq\PP_2(n)$ satisfies
$|\C|\le 2^n$; that is, Conjecture~\ref{conj:bev} holds.
\end{theorem}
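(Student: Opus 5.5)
The plan is to argue by contradiction. Suppose $|\C|>2^n$. By Proposition~\ref{prop:gram}, $M$ is singular. By Lemma~\ref{lem:singular-dep} there is then a nontrivial character $\chi$ with $\mu_\chi=0$, and the dependence \eqref{eq:dependence} holds. The argument has two ingredients: a splitting of $\mu_\chi$ into its rational and $\sqrt2$ components using \eqref{eq:coeff-split}, and a parity (2-adic) argument on the rational component.

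\emph{Step 1: the even-dimensional codewords form a subgroup.} By \eqref{eq:dim-identity},
\[
\dim(X\boxsum Y)=\dist(X,Y)\equiv \dim X+\dim Y \pmod 2 .
\]
Hence $X\mapsto \dim X \bmod 2$ is a homomorphism $(\C,\boxsum)\to\mathbb Z/2$. Its kernel $\C_0$, the set of even-dimensional codewords, is a subgroup of index $1$ or $2$, and $(\C_0,\boxsum)$ is itself a linear subspace code in $\PP_2(n)$.

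\emph{Step 2: reduce to the even-dimensional subcode.} Taking the rational component of $\mu_\chi=0$ via \eqref{eq:coeff-split} gives
\[
\sum_{Z\in\C_0}2^{-\dim Z/2}\chi(Z)=0 .
\]
This says exactly that the eigenvalue $\mu^{\C_0}_{\chi|_{\C_0}}$ of the Gram matrix of $\C_0$ vanishes. If $\chi$ were trivial on $\C_0$, this sum would be strictly positive, as in the remark before Lemma~\ref{lem:singular-dep}. So $\chi|_{\C_0}$ is nontrivial. I would set up an induction on $|\C|$ for the statement ``$\mu_\chi\neq0$ for every character $\chi$.'' When $\C_0\subsetneq\C$, the induction hypothesis applied to $\C_0$ gives the contradiction immediately. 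It therefore remains to treat codes in which every codeword has even dimension.

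\emph{Step 3 (main obstacle): all codewords of even dimension.} Here \eqref{eq:coeff-split} gives no further splitting, since everything is rational, so a genuine parity argument is needed. Let $D$ be the maximal dimension of a codeword and multiply $\mu_\chi=0$ by $2^{D/2}$. Every term with $\dim Z<D$ becomes an even integer, because $D-\dim Z$ is even and positive. Each codeword of dimension $D$ contributes $\pm1$. Hence the number $N_D$ of codewords of dimension $D$ must be even, and it suffices to show that $N_D$ is odd.

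To control $N_D$, I would first try the group structure. If $X,Y$ both have dimension $D$, then $\dim(X\boxsum Y)=2D-2\dim(X\cap Y)\le D$, which forces $\dim(X\cap Y)\ge D/2$. I would then try to show that the top-dimensional codewords, together with $0$, behave like a coset structure, for example that $\{X\boxsum Y\}$ stays inside a subgroup of odd-controlled size. Alternatively, I would evaluate \eqref{eq:dependence} pointwise at suitable vectors $v$ rather than at $v=0$: for instance, take $v$ in exactly one top-dimensional codeword, so that the $2^{D/2}$-scaled relation at $v$ has exactly one odd term. If neither route closes the case, the fallback is to look at $\dim$ modulo $4$ and work in $\mathbb Q(2^{1/4})$, repeating Steps 1--2 at the next 2-adic level. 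I expect this all-even case to be where the real work lies.
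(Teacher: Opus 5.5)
\paragraph{There is a genuine gap in Step 3.} Your Steps 1 and 2 are correct. By \eqref{eq:dim-identity} the parity of the dimension is a homomorphism, and the rational part of $\mu_\chi=0$ is the eigenvalue equation of the even subcode. But these steps only push the difficulty into the all-even case, and there Step 3 is not carried out. Its main route also cannot work, because $N_D$ need not be odd.

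A counterexample: in $\F_2^5$ let $X=\langle e_1,e_2,e_3,e_4\rangle$, $Y=\langle e_1,e_2,e_3,e_5\rangle$, $Z=\langle e_1,e_4+e_5\rangle$. Give $\{0,X,Y,Z\}$ the Klein-four addition $X\boxsum Y=Z$, $X\boxsum Z=Y$, $Y\boxsum Z=X$.
\begin{itemize}
\item One checks $\dist(X,Y)=2=\dim Z$ and $\dist(X,Z)=\dist(Y,Z)=4$.
\item So $\dist(A,B)=\dim(A\boxsum B)$ for all codewords, which gives translation invariance.
\item All dimensions are even, $D=4$ and $N_D=2$.
\item The scaled relation $4\mu_\chi=4+\chi(X)+\chi(Y)+2\chi(Z)$ has only even terms, so parity at the top level says nothing.
\end{itemize}
Your other two ideas also fail. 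Choosing $v$ in exactly one top-dimensional codeword is not always possible: in the Fano code of Example~\ref{ex:equality}, each nonzero vector lies in three of the seven planes. The $\mathbb Q(2^{1/4})$ fallback is vacuous, since in the all-even case every weight $2^{-\dim Z/2}$ is already rational.

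\paragraph{The missing idea.} Use the whole identity \eqref{eq:dependence}, not just its $v=0$ coordinate, which is nothing but $\mu_\chi=0$. In $\mu_\chi$ the terms of lowest $2$-adic valuation come from the top-dimensional codewords, and you cannot control how many there are. Summing \eqref{eq:dependence} over all $v\in V$ (pairing with $\mathbf 1_V$) multiplies each weight by $|X|=2^{\dim X}$ and gives
\[
 \sum_{X\in\C}\chi(X)\,2^{\dim X/2}=0 .
\]
Now the lowest valuation belongs to the zero codeword alone. The rational part of this sum is $1+\sum_{X\ne0,\ \dim X\ \text{even}}\chi(X)\,2^{\dim X/2}$. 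Every term after the $1$ is even, so this is an odd integer and cannot vanish, which is the contradiction.

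This is the paper's argument, written there as the $v=0$ equation plus the sum over $v\ne0$. It needs no subgroup $\C_0$, no induction, and no special treatment of all-even codes. The rational/$\sqrt2$ split already discards odd-dimensional codewords, and the only facts about $\chi$ used are $\chi=\pm1$ and $\chi(0)=1$.
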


\begin{proof}
By Proposition~\ref{prop:gram} it suffices to show that the Gram matrix $M$ is
nonsingular. We proceed by contradiction. Suppose $M$ is singular. By
Lemma~\ref{lem:singular-dep} there is a nontrivial character $\chi$ of
$(\C,\boxsum)$ with
\[
 \sum_{X\in\C}\chi(X)\,2^{-\dim X/2}\,\mathbf 1_X=0
 \qquad\text{in }\mathbb R^{V}.
\]
Taking the $v$-coordinate of this identity, and recalling that
$\mathbf 1_X(v)=1$ exactly when $v\in X$, we obtain, for every $v\in V$,
\begin{equation}
 \sum_{\substack{X\in\C\\ v\in X}}\chi(X)\,2^{-\dim X/2}=0 .
 \label{eq:coord}
\end{equation}
This is an equality in $\mathbb Q(\sqrt2)$. By \eqref{eq:coeff-split} its
rational part is the sum of the even-dimensional terms, so, taking rational
parts on both sides in \eqref{eq:coord},
\begin{equation}
 \sum_{\substack{X\in\C,\ v\in X\\ \dim X\ \mathrm{even}}}
 \chi(X)\,2^{-\dim X/2}=0
 \qquad(v\in V).
 \label{eq:rational}
\end{equation}

We exploit \eqref{eq:rational} for two summations over $v$. First we substitute $v=0$, and use the fact that every codeword contains $0$. The following expression is obtained by separating the zero codeword, whose
contribution is $\chi(0)\,2^{0}=1$,
\begin{equation}
 1+\sum_{\substack{X\in\C,\ X\ne0\\ \dim X\ \mathrm{even}}}
 \chi(X)\,2^{-\dim X/2}=0 .
 \label{eq:atzero}
\end{equation}

Summing \eqref{eq:rational} over all nonzero $v\in V$, a codeword of dimension
$k$ is counted once for each of its $2^{k}-1$ nonzero vectors, while the zero
codeword contributes nothing:
\begin{equation}
 \sum_{\substack{X\in\C,\ X\ne0\\ \dim X\ \mathrm{even}}}
 \bigl(2^{\dim X}-1\bigr)\,\chi(X)\,2^{-\dim X/2}=0 .
 \label{eq:sumv}
\end{equation}

Adding \eqref{eq:atzero} and \eqref{eq:sumv} and using
$1+\bigl(2^{\dim X}-1\bigr)=2^{\dim X}$ together with
$2^{\dim X}\cdot 2^{-\dim X/2}=2^{\dim X/2}$ gives
\begin{equation}
 1+\sum_{\substack{X\in\C,\ X\ne0\\ \dim X\ \mathrm{even}}}
 \chi(X)\,2^{\dim X/2}=0 .
 \label{eq:parity}
\end{equation}
Every codeword $X$ in the sum \eqref{eq:parity} has even dimension
$\dim X\ge 2$, so $2^{\dim X/2}$ is an even integer; as $\chi(X)=\pm1$, each
summand is an even integer, and hence so is the sum. The left-hand side of
\eqref{eq:parity} is therefore an odd integer, which cannot vanish. This
contradiction shows that $M$ is nonsingular, and so $|\C|\le 2^n$ by
Proposition~\ref{prop:gram}.
\end{proof}

The proof identifies the equality case. Since $M$ is nonsingular, it is
positive definite, so the normalized indicators
$\{\,2^{-\dim X/2}\,\mathbf 1_X : X\in\C\,\}$ are not merely linearly
independent but form an independent set of size $|\C|$ in the $2^n$-dimensional
space $\mathbb R^{V}$. Hence
\[
 \begin{aligned}
  |\C|=2^n
   &\iff \{\,2^{-\dim X/2}\,\mathbf 1_X : X\in\C\,\}\ \text{is a basis of }
         \mathbb R^{V}\\[4pt]
   &\iff \{\,\mathbf 1_X : X\in\C\,\}\ \text{is a basis of }\mathbb R^{V}.
 \end{aligned}
\]

The bound is attained. For a basis $\{e_1,\dots,e_n\}$ of $\F_2^n$, the spans
of all its subsets, with the symmetric difference of index sets as linear
addition, form a linear subspace code of size $2^n$~\cite[Theorem~6]{Braun};
this \emph{code derived from a fixed basis} contains the $n$ coordinate lines
$\langle e_i\rangle$ and the full space $\F_2^n$. It is not, however, the only
code meeting the bound. The following example, due to Braun, Etzion and
Vardy~\cite[Example~1]{Braun}, attains $2^n$ while containing neither a
one-dimensional codeword nor the full space, so equality does not force a code
to be derived from a fixed basis.

\begin{example}
\label{ex:equality}
Let $V=\F_2^3$. For a nonzero linear functional $f\in V^{*}$ write
$H_f=\ker f$, a two-dimensional subspace. Over $\F_2$ two nonzero functionals
have the same kernel only if they are equal, so the seven nonzero functionals
give the seven two-dimensional subspaces of $V$. Set
\[
 \C_0=\{0\}\cup\{\,H_f : f\in V^{*}\setminus\{0\}\,\},
\]
and define a linear addition by $0\boxsum H_f=H_f$ and
\[
 H_f\boxsum H_g=H_{f+g}\qquad(f,g\in V^{*}\setminus\{0\},\ f\ne g).
\]
The map $V^{*}\to\C_0$ given by $0\mapsto 0$ and $f\mapsto H_f$ is a bijection
carrying addition of functionals to $\boxsum$, so $(\C_0,\boxsum)$ is an
elementary abelian $2$-group of order $8$, with identity $0$ and every element
of order two.

For translation invariance it suffices, since $\boxsum$-translation permutes
$\C_0$, to check that $\C_0$ is equidistant. Every nonzero codeword has
dimension $2$, so $\dist(0,H_f)=2$. For distinct nonzero $f,g$, the linear map
$v\mapsto\bigl(f(v),g(v)\bigr)$ from $V$ to $\F_2^2$ is onto (as $f,g$ are then
linearly independent), so its kernel $H_f\cap H_g$ has dimension $1$, whence
\[
 \dist(H_f,H_g)=\dim H_f+\dim H_g-2\dim(H_f\cap H_g)=2+2-2=2.
\]
Thus all pairwise distances in $\C_0$ equal $2$; a constant distance is
automatically translation invariant, because translation preserves equality
and inequality of codewords. Hence $\C_0$ is a linear subspace code with
\[
 |\C_0|=8=2^3,
\]
meeting the bound of Theorem~\ref{thm:main}. Every nonzero codeword of $\C_0$
has dimension $2$, so $\C_0$ contains no one-dimensional codeword and does not
contain $V$; in particular it is not derived from a fixed basis. Basu, who
studies such equidistant codes systematically, identifies the nonzero
codewords of $\C_0$ with the lines of the Fano plane and shows that $\C_0$ is
the \emph{unique} equidistant linear code of size $2^3$ in
$\PP_2(3)$~\cite{Basu}.
\end{example}

Under the additional hypothesis that the full space is a codeword, Pai and Rajan~\cite{PaiRajan} proved that a code meeting the bound must be
derived from a fixed basis. Example~\ref{ex:equality} shows that this fails
without the hypothesis, and a classification of the equality codes in general
appears to be open.

%=====================================================================
\section{Concluding remarks}
\label{sec:remarks}
%=====================================================================
We have proved the Braun--Etzion--Vardy conjecture over $\F_2$ in full
generality (Theorem~\ref{thm:main}): every binary linear subspace code
satisfies $|\C|\le 2^n$, with no hypothesis on the code. The proof is a single
positive-definiteness computation for the indicator Gram matrix, and it
subsumes the earlier partial results of Pai and Rajan~\cite{PaiRajan},
who assumed $\F_2^n\in\C$, and of Mahak and Bhaintwal~\cite{MahakBhaintwal},
who counted one-dimensional codewords. Together with Basu's construction of
linear codes of size greater than $2^n$ for every prime power
$q>2$~\cite{Basu}, this shows that the bound $|\C|\le 2^n$ holds universally
precisely over the binary field.

One natural question remains open: the codes attaining the bound are not
classified. By Example~\ref{ex:equality} an equality code need not be derived
from a fixed basis, and the characterization of Pai and Rajan is known
only under the hypothesis $\F_2^n\in\C$.

\section*{Declaration of competing interest}
The author declares that he has no known competing financial interests or
personal relationships that could have appeared to influence the work reported
in this paper.

\section*{Funding}
This research did not receive any specific grant from funding agencies in the
public, commercial, or not-for-profit sectors.

\section*{Data availability}
No data was used for the research described in the article.

\section*{Declaration of generative AI and AI-assisted technologies in the manuscript preparation process}
During the development of this work, the author used ChatGPT Luna-5.6-medium as an interactive tool for mathematical exploration. The interaction involved proposing and testing proof strategies, with the author providing mathematical guidance, evaluating intermediate arguments, and rejecting incorrect or incomplete approaches. The final proof of Theorem~\ref{thm:main} emerged substantially from arguments proposed by the AI system during this process. The author subsequently checked the proof independently, revised its presentation, and takes full responsibility for its correctness and for all mathematical claims in the article.

%=====================================================================
%  Bibliography
%=====================================================================

\end{document}